\documentclass{ifacconf}
\usepackage{algorithm}
\usepackage{algpseudocode}
\usepackage{amsmath} 
\usepackage{amssymb} 
\usepackage{natbib}        
\setcitestyle{square}

\newtheorem{theorem}{Theorem}      
         
\newtheorem{proposition}{Proposition}
  
\newtheorem{definition}{Definition}

\usepackage{graphicx}      
\usepackage{natbib}        
\begin{document}
\begin{frontmatter}

\title{Learning, Misspecification, and Cognitive Arbitrage in Linear-Quadratic Network Games} 

\author[NYU]{Quanyan Zhu}
\author[NYU]{Zhengye Han} 

\address[NYU]{Department of Electrical and Computer Engineering, \\
              New York University, Brooklyn, NY 11201 USA \\
              (e-mail: qz494@nyu.edu, zh3286@nyu.edu).}

\begin{abstract}
We study strategic interaction in linear-quadratic network games where agents act on subjective, misspecified models of their environment. Agents observe noisy aggregate signals generated by local network externalities and interpret them through simplified conjectures, such as constant or mean-field representations. We characterize the long-run behavior using the Berk-Nash equilibrium (BNE) concept, establishing conditions under which BNE diverges from the Nash equilibrium of the perfectly specified game. We quantify this divergence using a Value of Misspecification (VoM) metric. Building on this framework, we introduce \textit{cognitive arbitrage}—a design paradigm where a system designer strategically shapes agents' conjectures via minimal observation distortions to steer equilibrium outcomes. We formulate the cognitive arbitrage problem as a Stackelberg optimization with closed-form solutions and prove the convergence of a two-time-scale learning algorithm to the optimal BNE. Our results provide a principled framework for influencing behavior in networked systems with bounded rationality, offering a new perspective on mechanism design that operates on agents' representations rather than their incentives.
\end{abstract}

\begin{keyword}
Networked Control Systems, Stochastic Control and Estimation, Large Scale Systems.
\end{keyword}

\end{frontmatter}

\section{Introduction}
Network games arise naturally in a wide range of modern engineered and socio-technical systems, including federated learning, social and economic networks, distributed control systems, and emerging agentic AI networks \citep{Federated}. In such settings, multiple decision-makers interact through structured interdependencies, where each agent's payoff depends not only on her own action but also on the actions of others connected through a network. Understanding how agents form decisions and equilibria in these environments is therefore a central problem in networked decision-making and multi-agent systems \citep{jackson2008social, ballester2006s}.

A key challenge in network games is that agents rarely possess an objective or complete view of the underlying environment. Instead, agents operate based on subjective internal models of the world, constructed from partial observations, limited communication, and bounded cognitive or computational resources \citep{Behavioral}. Decisions are made relative to these subjective models, and agents are typically satisfied as long as their internal models are consistent with the observations generated by the real world, rather than objectively correct. This phenomenon is particularly pronounced in large-scale networks, where the true interaction structure may be complex, heterogeneous, nonlinear, and time-varying. In practice, agents often resort to simplified representations, such as linear or low-dimensional models, to approximate and reason about a much more complex environment \citep{bertsekas1995neuro}.

This perspective is closely connected to, but distinct from, the classical theory of mean-field games. Most existing work in mean-field games assumes that agents observe the mean field objectively, either directly or through a common aggregate signal \citep{lasry2007mean, huang2006large}. However, this assumption is often unrealistic. In many networked systems, agents only observe the actions of a local neighborhood and must infer or conjecture the aggregate effect of the population. As a result, agents form subjective mean-field representations based on local information, which may differ systematically from the true population average \citep{caines2017mean}. These subjective mean fields, rather than the true mean field, ultimately drive agents' decision-making processes. Importantly, mean-field-based decisions can still be acceptable---and even stable---provided that the distortion between the subjective mean field and the observations generated by the real world remains sufficiently small.

The presence of model misspecification and subjective representation can be viewed not only as a limitation of agent decision-making, but also as a vulnerability, or, from another perspective, an arbitrage opportunity, for a third party or system designer to influence network behavior. Because agents base their decisions on internally consistent but potentially distorted representations of the world, small and well-targeted perturbations to the observations they receive can induce systematic and predictable changes in equilibrium outcomes. In particular, a designer need not alter agents' objectives or incentives directly; instead, she can shape behavior indirectly by nudging or distorting the information upon which agents condition their subjective models.

We refer to this mechanism as \textit{cognitive arbitrage}. Cognitive arbitrage exploits the gap between the true underlying environment and the agents' subjective representations, leveraging misspecification to achieve desired system-level outcomes at minimal cost. Rather than correcting agents' models, the designer strategically works within the agents' perceived world, ensuring that the induced behavior remains internally rational and observationally consistent from the agents' standpoint. We develop a principled framework for designing such distortions, characterizing when and how a designer can optimally influence equilibrium behavior while respecting explicit resource.

In this work, we investigate these ideas in the context of linear-quadratic network games, a canonical class of models that captures strategic interactions with network externalities. Linear-quadratic games are widely used due to their analytical tractability and their ability to approximate more complex interactions in applications such as opinion dynamics, economic networks, power systems, and distributed learning \citep{ballester2006s, bacsar1998dynamic}. Within this setting, we study how agents equipped with subjective, locally informed models learn and act, how equilibrium behavior emerges under model misspecification, and how a higher-level designer can strategically influence outcomes through controlled distortions. By combining tools from Berk-Nash equilibrium \citep{esponda2016bayesian}, mean-field approximations, and two-time-scale learning \citep{borkar2008stochastic}, we provide a unified framework for understanding subjective decision-making and cognitive arbitrage in networked systems.

The paper is structured as follows. Section 2 introduces the network game and the agents' subjective modeling framework. Section 3 characterizes Berk-Nash equilibrium under different classes of conjectures. Section 4 analyzes the gap between Berk-Nash and perfectly specified Nash equilibria. Section 5 formulates the cognitive arbitrage problem as a Stackelberg optimization, derives closed-form solutions, and establishes two-time-scale convergence results. Section 6 presents a numerical case study illustrating the theoretical findings. Section 7 concludes with implications and future research directions

\section{A NETWORK GAME WITH MISSPECIFIED MODELS}

We consider a network game where agents interact through unobserved externalities and learn via noisy aggregate signals \citep{jackson2015games, galeotti2010network}. We model the long-run behavior using the Berk-Nash equilibrium (BNE) concept \citep{esponda2016berk, esponda2016bayesian}.

\subsection{Network and Objective Model}
Consider a directed network $G=(V, E)$ with agents $V=\{1, \ldots, n\}$. Let $V_{i}:=\{j \in V \setminus \{i\} \mid (j, i) \in E\}$ be agent $i$'s neighborhood and $G=[g_{ij}]$ be the interaction matrix, where $g_{ij} \in \mathbb{R}$ captures the impact of $j$ on $i$ (with $g_{ii}=0$). Each agent $i$ chooses an action $x_{i} \in \mathbb{R}$ and observes a noisy signal:
\[
y_{i} = \sum_{j \in V_{i}} g_{i j} x_{j} + \eta_{i}, \quad \eta_{i} \sim \mathcal{N}(0, \sigma_{i}^{2}).
\]
Agent $i$'s true cost function is $J_{i}(x)=\frac{r_{i}}{2} x_{i}^{2}+x_{i}(y_{i}-b_{i})$, where $r_{i}>0$ represents a private cost parameter and $b_{i} \in \mathbb{R}$ is an idiosyncratic bias. While agents know their own cost parameters, they do not observe the individual actions $x_j$ ($j \neq i$) nor the weights $g_{ij}$.

\subsection{Subjective Conjectures}
Agents attribute the signal $y_i$ to a simplified subjective model $\mathcal{C}_{i}:=(z_{i}, \theta_{i})$. Specifically, agent $i$ assumes the signal follows $y_{i}=\theta_{i}^{\top} z_{i}+\eta_{i}$, where $z_{i} \in \mathbb{R}^{k}$ is a regressor representing perceived network features, and $\theta_{i} \in \Theta_{i} \subset \mathbb{R}^{k}$ is a parameter vector to be learned. We assume that while the detailed individual actions $x_j$ are unobservable, agent $i$ can directly sense the aggregated feature $z_i$ (e.g., total local activity or interference) from the environment. Common conjectures include:
\begin{itemize}
    \item Constant: $z_i = 1$. Agent assumes a static background noise.
    \item Aggregate: $z_i = \sum_{j \in V_i} x_j$. Agent perceives only total neighbor activity.
    \item Mean-field: $z_i = \frac{1}{|V_i|} \sum_{j \in V_i} x_j$. Agent reacts to average local behavior.
    \item Feature-based: $z_i=\phi(\frac{1}{|V_i|} \sum_{j \in V_i} x_j)$ for some feature map $\phi$.
\end{itemize}
Misspecification arises when the true signal generator $\sum g_{ij}x_j$ cannot be perfectly represented by $\theta_i^\top z_i$ for any $\theta_i$.

\subsection{Optimal Behavior Given Misspecified Conjectures}
Given a conjecture parameter $\theta_{i}$, agent $i$ chooses her action to minimize the subjective expected cost:
\[
\min _{x_{i} \in \mathbb{R}} \mathbb{E}_{\theta_{i}}\left[\frac{r_{i}}{2} x_{i}^{2}+x_{i}\left(y_{i}-b_{i}\right)\right].
\]
Under the subjective model, the conjectured conditional mean is $\mathbb{E}_{\theta_{i}}[y_{i}]=\theta_{i}^{\top} z_{i}$. Substituting this into the objective function yields a strictly convex optimization problem:
\[
\min _{x_{i} \in \mathbb{R}}\left\{\frac{r_{i}}{2} x_{i}^{2}+x_{i}\left(\theta_{i}^{\top} z_{i}-b_{i}\right)\right\}.
\]
The unique optimal action $x_i$ is explicitly given by the best-response map $v_{i}: \Theta_{i} \rightarrow \mathbb{R}$:
\begin{equation}
x_{i}=v_{i}(\theta_{i}):=\frac{b_{i}-\theta_{i}^{\top} z_{i}}{r_{i}}. \label{eq:best_response}
\end{equation}
Thus, the agent's action is affine in the conjectured mean signal, where $b_i$ captures intrinsic bias and $\theta_i^\top z_i$ captures the perceived strategic externalities.

\subsection{Statistical Consistency and Learning}
Let $x$ be the joint strategy profile. Under the \textit{true objective model}, agent $i$'s observation follows the distribution $P_{i}^{0}(\cdot \mid x)=\mathcal{N}(\mu_{i}^{0}(x), \sigma_{i}^{2})$, where the true mean is determined by the actual network interaction: $\mu_{i}^{0}(x)=\sum_{j \neq i} g_{i j} x_{j}.$

In contrast, under the \textit{subjective model} specified by $(z_i, \theta_i)$, the agent perceives the signal distribution as $P_{i}^{\theta_{i}}(\cdot \mid x_{i})=\mathcal{N}(\mu_{i}^{\theta_{i}}, \sigma_{i}^{2})$, where the subjective mean is $\mu_{i}^{\theta_{i}}=\theta_{i}^{\top} z_{i}$. Agents update their conjectures to fit the observations generated by equilibrium play. Following the Berk-Nash framework, a conjecture is consistent if it minimizes the Kullback-Leibler (KL) divergence between the true and subjective distributions:
\[
\theta_{i}^{*} \in \arg \min _{\theta_{i} \in \Theta_{i}} \operatorname{KL}\left(P_{i}^{0}(\cdot \mid x) \| P_{i}^{\theta_{i}}(\cdot \mid x_{i})\right).
\]
Since both distributions are Gaussian with identical variance $\sigma_i^2$, the KL divergence simplifies to the squared Euclidean distance between the means: 
\[
\operatorname{KL}\left(P_{i}^{0} \| P_{i}^{\theta_{i}}\right)=\frac{1}{2 \sigma_{i}^{2}}\left(\mu_{i}^{0}(x)-\mu_{i}^{\theta_{i}}\right)^{2}.
\]
Consequently, the statistical consistency condition reduces to a pointwise least-squares projection of the true aggregate influence onto the conjecture class:
\begin{equation}
\theta_{i}^{*} \in \arg \min _{\theta_{i} \in \Theta_{i}}\left(\sum_{j \neq i} g_{i j} x_{j}-\theta_{i}^{\top} z_{i}\right)^{2}. \label{eq:consistency}
\end{equation}

\subsection{Berk-Nash Equilibrium}
We now formally define the equilibrium concept, which requires fixed-point consistency between actions and conjectures.

\begin{definition}[Berk-Nash Equilibrium]
A Berk-Nash equilibrium (BNE) is a tuple $(x^{*}, \theta^{*})$ consisting of a strategy profile $x^{*}$ and a conjecture profile $\theta^{*}$ such that for all $i \in V$:
\begin{enumerate}
    \item \textbf{Optimality:} Given $\theta_{i}^{*}$, the action $x_{i}^{*}$ minimizes the subjective cost, satisfying \eqref{eq:best_response}: 
    \[ x_{i}^{*} = \frac{b_{i}-\theta_{i}^{* \top} z_{i}^{*}}{r_{i}}. \]
    \item \textbf{Consistency:} Given $x^{*}$, the conjecture $\theta_{i}^{*}$ minimizes the KL divergence (or equivalently, the squared error) as in \eqref{eq:consistency}: 
    \[ \theta_{i}^{*} \in \arg \min _{\theta_{i} \in \Theta_{i}}\left(\sum_{j \neq i} g_{i j} x_{j}^{*}-\theta_{i}^{\top} z_{i}^{*}\right)^{2}. \]
\end{enumerate}
\end{definition}

The BNE characterizes the steady-state where agents play optimally with respect to their misspecified models, and those models are the best statistical approximation of the environment given the agents' actions.

\section{Berk-Nash Equilibrium Analysis}

\subsection{Benchmark and Constant Conjectures}
We first consider the true Nash equilibrium (NE) as a benchmark. When agents know the interaction matrix $G=(g_{ij})$, the NE $x^{\mathrm{NE}}$ satisfies the condition $(R+G) x^{\mathrm{NE}}=b$, where $R=\operatorname{diag}(r_{i})$.

Consider now the \textit{Constant-Only Conjecture} where agents believe $y_{i}=\theta_{i}+\eta_{i}$ (i.e., $z_i \equiv 1$). The consistency condition requires $\theta_i^*$ to match the true expected signal. Thus, $\theta_{i}^{*}=\sum_{j \in V_{i}} g_{i j} x_{j}^{*}$. Substituting this learned parameter into the best response $x_{i}^{*}=(b_{i}-\theta_{i}^{*})/{r_{i}}$, we observe that the agent effectively reacts to the true aggregate interference:
\[
r_i x_i^* + \sum_{j \in V_{i}} g_{i j} x_{j}^{*} = b_i \implies (R+G) x^{*}=b.
\]
Thus, under constant conjectures, the BNE coincides with the NE. The projection onto a constant is lossless for determining equilibrium means, implying no distortion in the final outcome.

\subsection{Global Mean-Field Conjectures}
Suppose agents adopt a global mean-field regressor $z_{i} = \bar{x}_{-i} := \frac{1}{n-1} \sum_{j \neq i} x_{j}$. While the true signal is generated locally by $V_i$, agents attribute it to the global population average.
The consistent parameter $\theta_i^*$ minimizes the squared error between the true local influence and the conjectured global influence $(\sum_{j \in V_i} g_{ij}x_j - \theta_i \bar{x}_{-i})^2$. In a symmetric or large-population limit where $\bar{x}_{-i} \approx \bar{x}$, the optimal projection yields the \textit{total local influence}:
\[
\theta_{i}^{*}=\gamma_{i}:=\sum_{j \in V_{i}} g_{i j}.
\]
The resulting BNE actions are obtained by substituting $\theta_i^*$ into the best response:
\[
x_{i}^{*}=\frac{b_{i}-\gamma_{i} \bar{x}^{*}}{r_{i}}, \quad \text{with } \bar{x}^{*} = \frac{n^{-1} \sum_{i} (b_{i}/r_{i})}{1+n^{-1} \sum_{i} (\gamma_{i}/r_{i})}.
\]
Here, agents effectively simplify the complex topology into a single scalar $\gamma_i$.

\begin{proposition}[Mean-Field Limit]
Consider a sequence of networks with dense neighborhoods ($|V_i^{(n)}| \to \infty$) and scaled weights $g_{ij}^{(n)} \propto 1/n$. If row sums converge $\gamma_i^{(n)} \to \gamma$ and actions are asymptotically exchangeable, then the BNE action $x_i^{BN}$ converges almost surely to the standard Mean-Field Game (MFG) equilibrium:
\[
x^{\mathrm{MFG}}=\frac{b}{r+\gamma}.
\]
\end{proposition}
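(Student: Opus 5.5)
The plan is to pass to the limit directly in the closed-form BNE characterization of the global mean-field conjecture derived just above. For each $n$, that characterization gives
\[
x_i^{BN,(n)}=\frac{b_i-\gamma_i^{(n)}\bar{x}^{*(n)}}{r_i},\qquad
\bar{x}^{*(n)}=\frac{n^{-1}\sum_i b_i/r_i}{1+n^{-1}\sum_i \gamma_i^{(n)}/r_i}.
\]
I will read the limit statement in the homogeneous regime it implicitly assumes: $b_i\equiv b$ and $r_i\equiv r$, or more generally $b_i\to b$ and $r_i\to r$. This is what makes $x^{\mathrm{MFG}}=b/(r+\gamma)$ a single scalar. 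Under this reading, the work is to justify that the consistent conjecture converges to $\gamma$, and then to evaluate the fixed point.

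\textbf{Step 1: consistency of the conjecture.} Write the true local signal as
\[
\sum_{j\in V_i}g_{ij}^{(n)}x_j=\gamma_i^{(n)}\bar{x}_{-i}+\sum_{j\in V_i}g_{ij}^{(n)}\bigl(x_j-\bar{x}_{-i}\bigr).
\]
Then bound the remainder by
\[
\Bigl(\max_j |g_{ij}^{(n)}|\Bigr)\sum_{j\in V_i}|x_j-\bar{x}_{-i}|.
\]
Because $g_{ij}^{(n)}=O(1/n)$ and $|V_i^{(n)}|\le n$, this is at most $C\cdot\frac{1}{n}\sum_j|x_j-\bar{x}_{-i}|$, an empirical dispersion. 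Asymptotic exchangeability, together with a law-of-large-numbers argument for the sample average over the dense neighborhood $V_i^{(n)}$ (which is where the almost-sure qualifier enters), should make this term vanish. The least-squares minimizer in \eqref{eq:consistency} is the ratio of the signal to $\bar{x}_{-i}$ whenever $\bar{x}_{-i}\neq0$, so it satisfies $\theta_i^{*(n)}=\gamma_i^{(n)}+o(1)$, which tends to $\gamma$. In the case $\bar{x}_{-i}\to0$ the least-squares problem is degenerate. I would handle it by noting that the argmin set then contains $\gamma_i^{(n)}$, or by a selection argument.

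\textbf{Step 2: fixed point.} Substituting $\theta_i^*\to\gamma$ into the best response \eqref{eq:best_response} gives $x_i=(b-\gamma\bar{x})/r+o(1)$. Averaging over $i$ yields $\bar{x}=(b-\gamma\bar{x})/r+o(1)$, whose unique solution is $\bar{x}\to b/(r+\gamma)$, provided $r+\gamma\neq0$. This also follows directly by passing to the limit in the closed form for $\bar{x}^{*(n)}$, since $n^{-1}\sum_i\gamma_i^{(n)}/r_i\to\gamma/r$. Finally, plugging back in gives
\[
x_i^{BN}\to\frac{b-\gamma b/(r+\gamma)}{r}=\frac{b}{r+\gamma},
\]
which matches the standard MFG fixed point $x=(b-\gamma x)/r$.

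The main obstacle is Step 1: controlling the deviation term uniformly, and turning exchangeability into an almost-sure vanishing of the weighted dispersion. I would first try to establish a priori boundedness of the BNE actions. This uses $r_i>0$ and a smallness condition such as $\sup_i\sum_j|g_{ij}|<\min_i r_i$, which keeps $R+G$ invertible uniformly. With bounded actions, a dominated-convergence and LLN argument over the neighborhood averages should close Step 1.
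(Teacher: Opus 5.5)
Your skeleton is the paper's own implicit argument: take the homogeneous reading and let $n\to\infty$ in the closed form $x_i=(b_i-\gamma_i\bar x^*)/r_i$. The paper gives no separate proof. It reads the limit off that closed form, which it obtains from the heuristic $\theta_i^*=\gamma_i$ when $\bar x_{-i}\approx\bar x$, and your Step 2 is fine. The gap is in how you propose to close Step 1.

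After the triangle inequality, your remainder is controlled by the empirical mean absolute deviation $\frac1n\sum_j|x_j-\bar x_{-i}|$. Neither a law of large numbers nor dominated convergence makes that vanish. The BNE actions are deterministic: the noise $\eta_i$ is integrated out in the KL projection, so there is nothing random for an LLN to act on, and ``almost surely'' is vacuous for the BNE itself. Even for genuinely random exchangeable actions, an LLN controls signed averages. An absolute deviation instead converges to the mean absolute deviation of the common law, which is positive unless the actions degenerate. Bounded but heterogeneous actions therefore keep your bound of order one. So Step 1 closes only if $\max_j|x_j-\bar x|\to0$, i.e.\ asymptotic homogeneity, which is essentially the conclusion. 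You must either take that as the meaning of ``asymptotically exchangeable'' (and drop the LLN) or prove it.

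The missing idea is that your own smallness condition proves it directly, and it makes the conjecture analysis unnecessary. Only the product $\theta_i^*z_i$ enters the best response. Under exact pointwise consistency with $\Theta_i=\mathbb{R}$ and $z_i=\bar x_{-i}\neq0$, this product equals $\sum_{j\in V_i}g_{ij}x_j$ identically. This is the same lossless-projection observation the paper makes for constant conjectures. Hence the BNE solves $(rI+G^{(n)})x=b\mathbf 1$, and you never need $\theta_i^*\to\gamma$ or a selection argument for the degenerate case. Set $m:=b/(r+\gamma)$, $e:=x-m\mathbf 1$, let $\gamma^{(n)}$ be the vector of row sums, and take $\kappa:=\sup_n\max_i\sum_j|g^{(n)}_{ij}|<r$ (which also gives $r+\gamma>0$). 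Then $(rI+G^{(n)})e=-m(\gamma^{(n)}-\gamma\mathbf 1)$, hence
\[
\|e\|_\infty\le\frac{|m|}{r-\kappa}\,\max_i\bigl|\gamma_i^{(n)}-\gamma\bigr|\to0 .
\]
This requires the row sums to converge uniformly in $i$. That is the natural reading of $\gamma_i^{(n)}\to\gamma$ for a triangular array, and you implicitly use it anyway to get $n^{-1}\sum_i\gamma_i^{(n)}/r_i\to\gamma/r$. This single sup-norm contraction estimate replaces the exchangeability/LLN step, the degenerate case, and the averaging in Step 2.
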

This confirms that classical MFGs appear as a limiting case of BNE where agents' global conjectures become asymptotically correct.

\subsection{Local Mean-Field and Network Sparsification}
\label{sec:LMF_analysis}
We now analyze the case where agent $i$ focuses on a subset $S_{i} \subseteq V_{i}$, using the regressor $z_{i}=|S_{i}|^{-1} \sum_{j \in S_{i}} x_{j}$.
The consistent parameter $\theta_i^*$ minimizes the Euclidean distance between the true signal and the predictor $\theta_i z_i$. This yields the ratio of true total influence to the subset average:
\begin{equation}
\theta_{i}^{*} = \frac{\sum_{j \neq i} g_{i j} x_{j}^{*}}{z_i^*} = \frac{\sum_{j \in V_i} g_{i j} x_{j}^{*}}{|S_{i}|^{-1} \sum_{j \in S_{i}} x_{j}^{*}}. \label{eq:theta_star_lmf}
\end{equation}
By substituting this $\theta_i^*$ back into the best response $x_i^* = (b_i - \theta_i^* z_i^*)/r_i$, the interaction structure is effectively re-weighted by the attention set size. This leads to a modified linear system:
\begin{equation}
(R+\widetilde{G}) x^{*}=b, \quad \text{where } \widetilde{g}_{i j}= \begin{cases}|S_{i}|^{-1} g_{i j}, & j \in S_i \\ 0, & \text{otherwise}\end{cases}. \label{eq:tilde_G}
\end{equation}
Here, $\widetilde{G}$ represents a "sparsified" perception of the network. The BNE coincides with the NE of a modified game where interaction weights are rescaled by $|S_i|^{-1}$. If $S_i$ omits payoff-relevant neighbors, the BNE systematically deviates from the true NE.

\subsection{Joint Learning Dynamics}
We define a coupled learning process where agents simultaneously update conjectures via stochastic gradient descent and actions via best response.
Let $z_i(k)$ be the regressor at step $k$. The update laws are:
\begin{align}
\theta_{i}(k+1) &= \theta_{i}(k)+\alpha_{k}(y_{i}(k)-\theta_{i}(k) z_{i}(k)) z_{i}(k), \label{eq:theta_update}\\
x_{i}(k+1) &= \frac{b_{i}}{r_{i}}-\frac{\theta_{i}(k+1)}{r_{i}} z_{i}(k). \label{eq:x_update}
\end{align}
This system represents a feedback loop where data determines conjectures, and conjectures determine actions.

\begin{theorem}[Almost-Sure Convergence]
Assume independent noise $\eta_i(k)$, diminishing step sizes $\sum \alpha_k = \infty, \sum \alpha_k^2 < \infty$, persistent excitation of $z_i(k)$, and that the spectral radius $\rho(R^{-1} \widetilde{G})<1$. Then, for any initial condition, the joint process $\{(\theta(k), x(k))\}_{k \geq 0}$ converges almost surely to the unique Berk-Nash equilibrium $(x^*, \theta^*)$.
\end{theorem}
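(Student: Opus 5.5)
The plan is to treat the coupled recursion \eqref{eq:theta_update}--\eqref{eq:x_update} as a stochastic approximation for $\theta$ alone, with $x$ slaved to $\theta$ through the best response. Then we invoke the ODE method of \citet{borkar2008stochastic}. Substituting $y_i(k)=\sum_{j\in V_i} g_{ij}x_j(k)+\eta_i(k)$, the $\theta$-update has the form
\begin{equation*}
\theta_i(k+1)=\theta_i(k)+\alpha_k\bigl[h_i(\theta(k),x(k))+M_i(k+1)\bigr],
\end{equation*}
where $h_i(\theta,x)=(\sum_{j\in V_i}g_{ij}x_j-\theta_i z_i)z_i$ and $M_i(k+1)=\eta_i(k)z_i(k)$. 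Since $\eta_i(k)$ is independent of the past and $z_i(k)$ is measurable with respect to the past, $M$ is a martingale difference sequence. Its conditional variance is bounded by $\sigma_i^2 z_i(k)^2$. So once we show iterates are bounded, $\sum\alpha_k^2<\infty$ gives convergence of $\sum\alpha_k M(k+1)$, and the noise is asymptotically negligible.

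The first step is to close the loop. Because $\alpha_k\to0$, $\theta$ moves slowly, and $x(k+1)$ equals the best response $v(\theta(k+1))$ evaluated at the current regressors. For fixed $\theta$, the action map $x\mapsto R^{-1}(b-\mathrm{diag}(\theta)Zx)$ is affine in $x$, since $z_i$ is linear in $x$ (local mean-field case). At the consistent $\theta^*$ it reduces to $x\mapsto R^{-1}(b-\widetilde G x)$. The hypothesis $\rho(R^{-1}\widetilde G)<1$ makes this map a contraction in a suitable weighted norm. It therefore has the unique fixed point $x^*=(R+\widetilde G)^{-1}b$, and \eqref{eq:theta_star_lmf} then determines $\theta^*$ uniquely. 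This gives existence and uniqueness of the BNE.

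I would then write the limiting ODE $\dot\theta_i=h_i(\theta,\bar x(\theta))$, where $\bar x(\theta)$ is the action profile induced by $\theta$. Near $\theta^*$, persistent excitation ($z_i(k)^2\ge c>0$ along the trajectory) makes the diagonal term $-z_i^2$ strictly dissipative. The Jacobian in $\theta$ then takes the form $-D(I+\text{coupling})$, where $D=\mathrm{diag}(z_i^{*2})$ and the coupling enters through $R^{-1}\widetilde G$. The spectral radius condition should make this Hurwitz. A Lyapunov function $V=\sum_i(\theta_i-\theta_i^*)^2$ weighted by the Perron-type weights of $R^{-1}\widetilde G$ gives global asymptotic stability, because the feedback is affine.

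For boundedness I would use the Borkar--Meyn scaling criterion. The scaled drift $h_\infty(\theta)=\lim_{c\to\infty}h(c\theta)/c$ is governed by the same linear structure, and the spectral radius bound makes its origin globally asymptotically stable. With boundedness, the martingale bound, and the stable ODE, the Kushner--Clark/Borkar theorem yields $\theta(k)\to\theta^*$ almost surely. Continuity of \eqref{eq:x_update} then gives $x(k)\to x^*$.

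The main obstacle is the global stability of the coupled ODE. The map $\theta\mapsto h(\theta,\bar x(\theta))$ is genuinely nonlinear, because $\theta_i$ multiplies $z_i$, which itself depends on $\theta$. So $\rho(R^{-1}\widetilde G)<1$ directly controls only the linearization at the equilibrium, or the action-level map. I would first try to change variables from $\theta_i$ to the perceived signal $m_i=\theta_i z_i$. In these coordinates the $x$-dynamics are exactly the affine iteration $x\leftarrow R^{-1}(b-m)$, and the tracking error $m-Gx$ contracts at rate governed by $z_i^2\ge c$. I would then combine a weighted-norm contraction for $x$ with this tracking error in a composite Lyapunov function. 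If that fails globally, I would fall back to local convergence plus a boundedness argument that keeps iterates in the basin.
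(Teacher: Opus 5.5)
\textbf{The proposal has a genuine gap, and the stated hypothesis cannot close it.} The false step is ``at the consistent $\theta^*$ the action map reduces to $x\mapsto R^{-1}(b-\widetilde G x)$.'' Let $Z$ be the averaging matrix ($Z_{ij}=|S_i|^{-1}$ for $j\in S_i$, $0$ otherwise). The frozen-$\theta$ iteration in \eqref{eq:x_update} is $x\mapsto R^{-1}(b-\operatorname{diag}(\theta)Zx)$. Row $i$ of that matrix carries $\theta_i/|S_i|$ on $S_i$, whereas $\widetilde G$ carries $g_{ij}/|S_i|$. Consistency gives only the \emph{vector} identity $\operatorname{diag}(\theta^*)Zx^*=Gx^*$, and it involves $G$, not $\widetilde G$. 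Two consequences follow.

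First, combined with optimality, this identity forces $(R+G)x^*=b$ whenever all $z_i^*\neq0$. So pairing $x^*=(R+\widetilde G)^{-1}b$ with $\theta^*$ from \eqref{eq:theta_star_lmf} does not satisfy the Berk--Nash definition unless $\widetilde Gx^*=Gx^*$. Your existence/uniqueness step therefore fails. Note that \eqref{eq:tilde_G}, which you inherited, is itself at odds with the definition.

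Second, $x$ is the \emph{fast} variable here: it is a unit-step iteration through $z(k)$, while $\theta$ moves by $O(\alpha_k)$. Slaving $x$ to $\bar x(\theta)$ needs $x\mapsto R^{-1}(b-\operatorname{diag}(\theta)Zx)$ to contract uniformly on a set containing the iterates. That is a condition on $R^{-1}\operatorname{diag}(\theta)Z$, about which $\rho(R^{-1}\widetilde G)<1$ says nothing. The same objects govern the slow linearization: with $D=\operatorname{diag}(z^*)$, the Jacobian of $h(\theta,\bar x(\theta))$ at $\theta^*$ is $-D(R+G)(R+\operatorname{diag}(\theta^*)Z)^{-1}D$.

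\textbf{Counterexample.} Take $n=3$, $R=I$, $b=(2,\tfrac35,1)$, $g_{13}=1$, $g_{21}=\tfrac12$, and all other weights zero. Let $S_1=\{2\}\subsetneq V_1=\{2,3\}$, $S_2=V_2=\{1\}$, $S_3=V_3=\{1\}$. Small nonzero perturbations of the zero weights change nothing below.
\begin{itemize}
\item $\widetilde G$ has the single nonzero entry $\widetilde g_{21}=\tfrac12$, so $\rho(R^{-1}\widetilde G)=0$.
\item The unique BNE is $x^*=(1,\tfrac1{10},1)$, $\theta^*=(10,\tfrac12,0)$, with $z^*=(\tfrac1{10},1,1)$, so persistent excitation holds.
\item $\operatorname{diag}(\theta^*)Z$ contains the $2$-cycle $10\cdot\tfrac12$ and has eigenvalues $\pm\sqrt5$. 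The BNE is therefore a linearly unstable fixed point of the recursion.
\item The slow Jacobian above also has the positive eigenvalue $\tfrac1{400}$.
\end{itemize}
A correct statement needs a different hypothesis. One option is uniform contraction of $R^{-1}\operatorname{diag}(\theta)Z$ on a compact set onto which $\theta$ is projected, together with a stability condition on the slow ODE. The paper states this theorem without proof, so there is no argument there to fall back on.

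\textbf{The other two pillars are also missing.} Global stability is explicitly left open, and a local fallback cannot give ``any initial condition.'' The Borkar--Meyn step fails as well. Using $\theta_i(Z\bar x)_i=b_i-r_i\bar x_i$, the slow drift is $h_i(\theta)=((R+G)\bar x(\theta)-b)_i\,(Z\bar x(\theta))_i$, with $\bar x(\theta)=(R+\operatorname{diag}(\theta)Z)^{-1}b$. This drift is undefined where $R+\operatorname{diag}(\theta)Z$ is singular. It is also not asymptotically linear: whenever $\operatorname{diag}(\theta)Z$ is invertible, $\bar x(c\theta)=O(1/c)$, so $h(c\theta)/c\to0$. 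The scaled limit is then degenerate, with no globally asymptotically stable origin.
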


\section{BEST-RESPONSE STRUCTURE AND VALUE OF MISSPECIFICATION}

We formalize the relationship between Nash (NE) and Berk-Nash equilibrium (BNE) via best-response mappings and quantify their divergence using the Value of Misspecification (VoM).

\subsection{Best-Response Gap}
Let $\mathcal{X}_{i}$ and $\Theta_{i}$ be agent $i$'s action and conjecture spaces, with $\mathcal{X}=\prod \mathcal{X}_i$.
The \textit{true-model best-response} $\mathrm{BR}_{i}^{0}: \mathcal{X}_{-i} \rightrightarrows \mathcal{X}_{i}$ minimizes the objective under the true distribution $P^{0}(\cdot \mid x)$:
\[
\mathrm{BR}_{i}^{0}(x_{-i}):=\arg \min _{x_{i} \in \mathcal{X}_{i}} \mathbb{E}_{P^{0}}\left[J_{i}(x_{i}, x_{-i}, y_{i})\right].
\]
A Nash equilibrium $x^{\mathrm{NE}}$ satisfies $x_{i}^{\mathrm{NE}} \in \mathrm{BR}_{i}^{0}(x_{-i}^{\mathrm{NE}}), \forall i$.

In contrast, BNE relies on the \textit{conjecture-conditioned best-response} $\mathrm{BR}_{i}^{\theta_{i}}$, which minimizes cost under the subjective distribution $P^{\theta_i}$:
\[
\mathrm{BR}_{i}^{\theta_{i}}(x_{-i}):=\arg \min _{x_{i} \in \mathcal{X}_{i}} \mathbb{E}_{P^{\theta_{i}}}\left[J_{i}(x_{i}, x_{-i}, y_{i})\right].
\]
In our LQ setting, $\mathrm{BR}_{i}^{\theta_{i}}$ yields the unique solution $v_{i}(\theta_{i})=(b_{i}-\theta_{i}^{\top} z_{i})/r_{i}$.
A BNE is a pair $(x^*, \theta^*)$ where $x^*_i \in \mathrm{BR}_{i}^{\theta_{i}^*}(x_{-i}^*)$ and $\theta_i^*$ minimizes the KL divergence between $P^0(\cdot \mid x^*)$ and $P^{\theta_i}(\cdot \mid x_i^*)$.
Thus, $x^{\mathrm{NE}}$ is a BNE if and only if there exist consistent conjectures $\theta^*$ such that $\mathrm{BR}_{i}^{\theta_{i}^*}(x_{-i}^{\mathrm{NE}}) \equiv \mathrm{BR}_{i}^{0}(x_{-i}^{\mathrm{NE}})$, i.e., the perceived marginal effects match the true externalities.

\subsection{Value of Misspecification (VoM)}
We quantify the efficiency impact of misspecified learning using the VoM metric. Let $\mathcal{J}(x):=\sum_{i} J_{i}(x) = \frac{1}{2} x^{\top} R x+x^{\top} G x-b^{\top} x$ be the aggregate cost.

\begin{definition}[Value of Misspecification]
The VoM is the relative cost deviation of the BNE $x^{\mathrm{BN}}$ from the NE $x^{\mathrm{NE}}$:
\begin{equation}
\mathrm{VoM}:=\frac{\mathcal{J}(x^{\mathrm{BN}})-\mathcal{J}(x^{\mathrm{NE}})}{\mathcal{J}(x^{\mathrm{NE}})}. \label{eq:vom_def}
\end{equation}
\end{definition}
Positive (negative) VoM implies misspecification increases (decreases) aggregate cost.

\subsection{VoM under Local Mean-Field Conjectures}
We now specialize this metric to the case of local mean-field (LMF) conjectures. Recall from Sec. \ref{sec:LMF_analysis} that the BNE actions satisfy $(R+\widetilde{G})x^{\mathrm{BN}}=b$, where $\widetilde{G}$ is the sparsified interaction matrix. The structural difference $\Delta G := \widetilde{G} - G$ drives the equilibrium gap.

By substituting the closed-form expressions for $x^{\mathrm{NE}}$ and $x^{\mathrm{BN}}$ into the quadratic cost $\mathcal{J}(x)$, we can derive explicit bounds on this metric in terms of the network distortion.
\begin{proposition}[Bounds on VoM]
Assume $R \succeq r_{\min} I$, $b \neq 0$, and that the spectral radius $\rho(G) < r_{\min}$ ensures stability. There exists a constant $C>0$ depending on the system matrices such that:
\[
|\mathrm{VoM}_{\mathrm{LMF}}| \leq C \frac{(\|R\|+\|G\|)^{2}}{r_{\min }^{2}} \|\widetilde{G}-G\|.
\]
\end{proposition}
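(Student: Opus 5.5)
The plan is to write both equilibria in closed form and perturb. By the earlier analysis, $x^{\mathrm{NE}}=(R+G)^{-1}b$ and $x^{\mathrm{BN}}=(R+\widetilde G)^{-1}b$. Set $M:=R+G$ and $\widetilde M:=R+\widetilde G=M+\Delta G$. Then the resolvent identity gives
\[
x^{\mathrm{BN}}-x^{\mathrm{NE}}=\big(\widetilde M^{-1}-M^{-1}\big)b=-\widetilde M^{-1}\,\Delta G\,M^{-1}b .
\]
Hence $\|x^{\mathrm{BN}}-x^{\mathrm{NE}}\|\le \|\widetilde M^{-1}\|\,\|M^{-1}\|\,\|b\|\,\|\Delta G\|$. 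I will use the stability hypothesis ($R\succeq r_{\min}I$ together with $\rho(G)<r_{\min}$, read as a norm/coercivity condition) to bound $\|M^{-1}\|\lesssim 1/(r_{\min}-\|G\|)$. The same bound is needed for $\widetilde M^{-1}$. This should be available because each row of $\widetilde G$ is a rescaled sub-row of $G$ with factor $|S_i|^{-1}\le 1$. I expect to absorb such quantities, which depend only on the system matrices, into $C$.

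Next I control the numerator of \eqref{eq:vom_def}. $\mathcal J$ is quadratic with Hessian $H:=R+G+G^\top$ and gradient $\nabla\mathcal J(x)=Hx-b$. A Taylor expansion about $x^{\mathrm{NE}}$ gives
\[
\mathcal J(x^{\mathrm{BN}})-\mathcal J(x^{\mathrm{NE}})=\nabla\mathcal J(x^{\mathrm{NE}})^\top\delta+\tfrac12\delta^\top H\delta,\qquad \delta:=x^{\mathrm{BN}}-x^{\mathrm{NE}} .
\]
The NE is not a minimizer of $\mathcal J$, so the linear term does not vanish. Using $Mx^{\mathrm{NE}}=b$, it equals $(G^\top x^{\mathrm{NE}})^\top\delta$. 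Both terms are then bounded with $\|H\|\le \|R\|+2\|G\|$, $\|x^{\mathrm{NE}}\|\le\|M^{-1}\|\|b\|$, and the bound on $\|\delta\|$ from the first step. The quadratic term is $O(\|\Delta G\|^2)$. Since $\|\Delta G\|\le 2\|G\|$ is a priori bounded, it can be reduced to a linear term, at the cost of another matrix-dependent constant. The result is
\[
|\mathcal J(x^{\mathrm{BN}})-\mathcal J(x^{\mathrm{NE}})|\le C_1\,\frac{(\|R\|+\|G\|)^2}{r_{\min}^2}\,\|b\|^2\,\|\Delta G\| ,
\]
where the factor $(\|R\|+\|G\|)^2/r_{\min}^2$ comes from pairing $\|H\|$ or $\|G\|$ with the two resolvent norms, each of order $1/r_{\min}$ up to constants.

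The main obstacle is the denominator $\mathcal J(x^{\mathrm{NE}})$, which must be bounded away from zero relative to $\|b\|^2$ so that the $\|b\|^2$ cancels. Substituting $b=Mx^{\mathrm{NE}}$ gives
\[
\mathcal J(x^{\mathrm{NE}})=\tfrac12 x^\top Rx+x^\top Gx-x^\top(R+G)x=-\tfrac12 x^{\mathrm{NE}\top}R\,x^{\mathrm{NE}}\le -\tfrac{r_{\min}}2\|x^{\mathrm{NE}}\|^2 .
\]
This is nonzero because $b\ne0$ forces $x^{\mathrm{NE}}\neq 0$. Moreover $\|x^{\mathrm{NE}}\|\ge \|b\|/\|M\|\ge \|b\|/(\|R\|+\|G\|)$, so $|\mathcal J(x^{\mathrm{NE}})|\ge r_{\min}\|b\|^2/\big(2(\|R\|+\|G\|)^2\big)$. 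Dividing, the $\|b\|^2$ cancels. This produces another power of $(\|R\|+\|G\|)^2/r_{\min}$, which, together with any mismatch in the powers of $r_{\min}$, I would fold into the unspecified constant $C$, as the statement permits. The delicate part is bookkeeping the exponents so that the displayed prefactor appears exactly. If that fails, I would state the bound with $C$ explicitly depending on $\|R\|$, $\|G\|$, and $r_{\min}$, which is consistent with ``depending on the system matrices.''
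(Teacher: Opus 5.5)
Your proposal is correct and follows the paper's proof essentially step for step. Both arguments use a perturbation bound $\|x^{\mathrm{BN}}-x^{\mathrm{NE}}\|=O(\|b\|\,\|\widetilde G-G\|)$, a cost-gap bound of order $\|b\|^2\|\widetilde G-G\|$ (your exact second-order expansion replaces the paper's mean-value argument), the identity $\mathcal J(x^{\mathrm{NE}})=-\tfrac12 x^{\mathrm{NE}\top}Rx^{\mathrm{NE}}$ together with $\|x^{\mathrm{NE}}\|\ge\|b\|/(\|R\|+\|G\|)$ so that $\|b\|^2$ cancels, and absorption of all matrix-dependent factors into $C$. Reading $\rho(G)<r_{\min}$ as a norm condition is in fact the right choice: under a pure spectral-radius hypothesis, the paper's step $\|(R+G)^{-1}\|\le(r_{\min}-\rho(G))^{-1}$, and even the invertibility of $R+G$, can fail for non-normal $G$.
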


\begin{pf}
Let $\Delta G:=\widetilde{G}-G$ denote the network distortion. Subtracting the equilibrium conditions $(R+G) x^{\mathrm{NE}}=b$ and $(R+\widetilde{G}) x^{\mathrm{BN}}=b$ yields the error dynamics:
\[
(R+G)(x^{\mathrm{BN}}-x^{\mathrm{NE}}) = -\Delta G x^{\mathrm{BN}}.
\]
Taking norms and exploiting the invertibility of $R+G$, we obtain:
\[
\|x^{\mathrm{BN}}-x^{\mathrm{NE}}\| \leq \|(R+G)^{-1}\| \|\Delta G\| \|x^{\mathrm{BN}}\|.
\]
Given $R \succeq r_{\min} I$ and $\rho(G) < r_{\min}$, the inverse is bounded by $\|(R+G)^{-1}\| \leq (r_{\min}-\rho(G))^{-1}$. Similarly, the equilibrium action is proportional to the input $b$, bounded by $\|x^{\mathrm{BN}}\| \leq \|b\|(r_{\min}-\rho(G))^{-1}$. Substituting these yields:
\begin{equation}
\|x^{\mathrm{BN}}-x^{\mathrm{NE}}\| \leq K_1 \|b\| \|\Delta G\|, \label{eq:action_bound}
\end{equation}
where $K_1 = (r_{\min}-\rho(G))^{-2}$. Note that the action deviation scales linearly with $\|b\|$.

Next, consider the aggregate cost difference. By the Mean Value Theorem, $|\mathcal{J}(x^{\mathrm{BN}})-\mathcal{J}(x^{\mathrm{NE}})| \leq \sup_{\xi} \|\nabla \mathcal{J}(\xi)\| \|x^{\mathrm{BN}}-x^{\mathrm{NE}}\|$. The gradient is $\nabla \mathcal{J}(x) = (R+G+G^\top)x - b$. Since the equilibrium actions are linear in $b$, the gradient norm along the path is bounded by $K_2 \|b\|$. Combining this with \eqref{eq:action_bound}, the cost difference scales quadratically with $b$:
\begin{equation}
|\mathcal{J}(x^{\mathrm{BN}})-\mathcal{J}(x^{\mathrm{NE}})| \leq K_3 \|b\|^2 \|\Delta G\|. \label{eq:cost_diff}
\end{equation}

To bound the relative error (VoM), we lower-bound the denominator. At Nash equilibrium, substituting $(R+G)x^{\mathrm{NE}}=b$ into the cost function yields $\mathcal{J}(x^{\mathrm{NE}}) = -\frac{1}{2} (x^{\mathrm{NE}})^{\top} R x^{\mathrm{NE}}$. Using $\|x^{\mathrm{NE}}\| \geq \|b\|(\|R\|+\|G\|)^{-1}$, we have:
\begin{equation}
|\mathcal{J}(x^{\mathrm{NE}})| \geq \frac{r_{\min}}{2} \|x^{\mathrm{NE}}\|^2 \geq K_4 \|b\|^2. \label{eq:cost_ne}
\end{equation}
Finally, dividing \eqref{eq:cost_diff} by \eqref{eq:cost_ne}, the $\|b\|^2$ terms cancel out, yielding the scale-invariant bound:
\[
|\mathrm{VoM}_{\mathrm{LMF}}| \leq \frac{K_3}{K_4} \|\Delta G\| = C' \|\Delta G\|.
\]
\end{pf}

\section{COGNITIVE ARBITRAGE}

We introduce the notion of \textit{cognitive arbitrage} to describe the deliberate design of agents' conjecture spaces in order to shape equilibrium outcomes under bounded rationality. In the Berk-Nash framework, agents do not optimize directly with respect to the true interaction structure. Instead, each agent reasons through a conjecture that maps observations into best responses. As a result, equilibrium behavior depends not only on incentives and data, but also on the expressiveness and structure of the conjectures agents use to interpret their environment.

This observation reveals a new design channel. Rather than eliminating misspecification by forcing agents to learn the full network, a designer may instead shape the conjectures through which agents reason, thereby steering the resulting Berk-Nash equilibrium toward desirable outcomes.

\subsection{The Concept of Cognitive Arbitrage}
Cognitive arbitrage refers to the strategic exploitation of this conjectural channel. The key idea is to improve equilibrium performance by manipulating the information agents use to reason about their strategic environment, thereby leveraging their misspecification without requiring full information or rationality.

\begin{definition}[Cognitive Arbitrage]
Cognitive arbitrage is the strategic manipulation of agents' equilibrium behavior by exploiting their misspecified conjectures. Specifically, it involves the \textbf{intentional injection of controlled informational distortions} into the agents' observation channels to steer the Berk-Nash equilibrium toward a desired benchmark, subject to constraints on distortion magnitude.
\end{definition}

It is crucial to note that cognitive arbitrage does not alter agents' preferences, payoff functions, or learning rules. Instead, it reshapes the \textit{projection step} inherent in Berk-Nash learning by shifting the effective data distribution onto which the agents' fixed models are projected. 

\subsection{Optimal Design via Minimal Distortion}
We now formalize this paradigm in the Linear-Quadratic (LQ) setting under local mean-field conjectures. We model the design intervention as a minimal distortion of the observation channel.

\textbf{Problem Formulation.}
Consider the LQ network game where the designer modifies the observation channel so that agent $i$ observes $\tilde{y}_{i}=\sum_{j \neq i} g_{i j} x_{j}+v_{i}+\eta_{i}$. Here, $v_i$ is a designer-induced distortion with mean $\delta_i$ and variance $\rho_i^2$.
The designer seeks to minimize the aggregate cost $\mathcal{J}(x)$ (defined in Sec. 4) subject to a budget $\Gamma$ on the distortion cost $\mathcal{C}(\delta, \rho) = \sum_{i} (\alpha_{i} \delta_{i}^{2}+\beta_{i} \rho_{i}^{2})$.

\textbf{Induced Equilibrium.}
Agents treat $\tilde{y}_i$ as the input for their learning. As derived in Sec. \ref{sec:LMF_analysis}, statistical consistency implies the induced BNE action profile $x^{\star}(\delta)$ satisfies $(R+\widetilde{G}) x^{\star}(\delta)=b-\delta$. Assuming nonsingularity, the unique response is:
\begin{equation}
x^{\star}(\delta)=M(b-\delta), \quad \text{where } M:=(R+\widetilde{G})^{-1}.
\end{equation}
Note that the variance parameter $\rho$ does not affect the mean action and is optimally set to $\rho^{\star}=0$.

\textbf{Optimization Solution.}
Substituting $x^{\star}(\delta)$ into the global objective $\mathcal{J}(x)$, the designer's problem reduces to a Quadratically Constrained Quadratic Program (QCQP):
\begin{equation}
\min _{\delta} \ f(\delta):=\delta^{\top} Q \delta-2 b^{\top} Q \delta+c^{\top} \delta \quad \text { s.t. } \quad \delta^{\top} A \delta \leq \Gamma, \label{eq:qcqp}
\end{equation}
where $Q$ is the symmetrized Hessian matrix defined by
\[
Q := \frac{1}{2} M^{\top} R M + \frac{1}{2} M^{\top} (G+G^{\top}) M,
\]
and $c:=M^{\top} b, A=\operatorname{diag}(\alpha_i)$.

\begin{theorem}[Optimal Cognitive Arbitrage Strategy]
Assume $Q \succeq 0$ and $A \succ 0$.
(i) The problem \eqref{eq:qcqp} admits a unique optimal solution $\delta^{\star}$ given by:
\[
\delta^{\star}=\left(Q+\lambda^{\star} A\right)^{-1}\left(Q b-\frac{1}{2} c\right),
\]
where $\lambda^{\star} \geq 0$ is the unique scalar satisfying the complementary slackness condition $\lambda^{\star}(\delta^{\star \top} A \delta^{\star}-\Gamma)=0$.
(ii) The resulting optimized BNE action profile is $x^{\mathrm{BN}}=M(b-\delta^{\star})$.
\end{theorem}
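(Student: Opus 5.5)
The plan is to treat \eqref{eq:qcqp} as a convex QCQP and read off its KKT conditions. With $Q \succeq 0$ the objective $f(\delta)=\delta^{\top} Q\delta-2b^{\top}Q\delta+c^{\top}\delta$ is convex. With $A\succ 0$ the feasible set $\{\delta:\delta^{\top}A\delta\le\Gamma\}$ is a nonempty compact ellipsoid, since it contains $\delta=0$ whenever $\Gamma\ge 0$. Existence of a minimizer then follows from Weierstrass. Slater's condition holds because $\delta=0$ is strictly feasible when $\Gamma>0$; the degenerate case $\Gamma=0$ forces $\delta^\star=0$ and can be handled separately. Hence the KKT conditions are necessary and sufficient and strong duality holds.

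Next I form the Lagrangian $L(\delta,\lambda)=f(\delta)+\lambda(\delta^{\top}A\delta-\Gamma)$. Stationarity reads
\[
2(Q+\lambda A)\delta-2Qb+c=0 .
\]
For every $\lambda>0$ the matrix $Q+\lambda A\succeq\lambda A\succ0$ is invertible, which gives $\delta(\lambda)=(Q+\lambda A)^{-1}(Qb-\tfrac12 c)$. This is exactly the closed form in (i), together with primal feasibility, $\lambda\ge0$, and complementary slackness.

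For uniqueness of $\lambda^\star$ and $\delta^\star$ I split into two cases.
\begin{itemize}
\item \emph{Unconstrained minimizer feasible.} If $Q$ is invertible, $\delta(0)=Q^{-1}(Qb-\tfrac12c)$ satisfies $\delta(0)^{\top}A\delta(0)\le\Gamma$, and we take $\lambda^\star=0$.
\item \emph{Constraint active.} Otherwise I study the secular function $\phi(\lambda)=\delta(\lambda)^{\top}A\delta(\lambda)$ on $(0,\infty)$. Simultaneously diagonalizing $Q$ and $A$ via $A^{-1/2}QA^{-1/2}=U\Lambda U^{\top}$ gives $\phi(\lambda)=\sum_k w_k^2/(\mu_k+\lambda)^2$, where $w=U^{\top}A^{-1/2}(Qb-\tfrac12c)$. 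This is strictly decreasing when $w\neq0$, and $\phi(\lambda)\to0$ as $\lambda\to\infty$. Hence there is a unique $\lambda^\star>0$ with $\phi(\lambda^\star)=\Gamma$.
\end{itemize}

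Uniqueness of $\delta^\star$ itself comes from strict convexity of the active Lagrangian, or of $f$ when $Q\succ0$. Alternatively, if two minimizers existed, their midpoint would be strictly interior to the strictly convex ellipsoid and also optimal, which contradicts activity.

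The main obstacle is the degenerate situation where $Q$ is only positive semidefinite and the unconstrained problem is unbounded or has a non-unique minimizer, so that "$\lambda^\star=0$" with a singular $Q$ could break uniqueness. I would handle it as follows.
\begin{itemize}
\item If $Qb-\tfrac12c$ has a component in $\ker Q$, then $f$ is unbounded below along that direction. The constraint must then be active, and the secular-equation argument applies, using only the positive $\lambda$ branch.
\item If $\lambda^\star=0$ with singular $Q$, uniqueness may genuinely fail. In that case I would interpret $(Q+\lambda^\star A)^{-1}$ as a pseudoinverse and note that uniqueness requires either activity or $Q\succ0$.
\end{itemize}
This is likely the point where the stated theorem implicitly assumes $Q\succ0$ or an active budget.

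Finally, (ii) is immediate. By the induced-equilibrium relation $(R+\widetilde G)x^\star(\delta)=b-\delta$ and the nonsingularity of $R+\widetilde G$, substituting $\delta^\star$ gives $x^{\mathrm{BN}}=M(b-\delta^\star)$. With $\rho^\star=0$, the variance budget plays no role.
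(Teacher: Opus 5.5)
Your proposal is correct and follows the same route as the paper. Both use convexity with Slater's condition for $\Gamma>0$, then stationarity to get $\delta(\lambda)=(Q+\lambda A)^{-1}(Qb-\frac{1}{2}c)$, then a strictly decreasing secular function $\lambda\mapsto\delta(\lambda)^{\top}A\delta(\lambda)$ to pin down $\lambda^\star$ (you prove this monotonicity by simultaneous diagonalization, while the paper only asserts it), and finally substitution into $x^\star(\delta)=M(b-\delta)$ for (ii).

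Your caveat about singular $Q$ is justified. The paper's proof writes $\delta(0)$ as if $Q$ were invertible, and with $Q$ only positive semidefinite and a slack budget the minimizer can genuinely be non-unique. So the uniqueness claim in (i) needs $Q\succ 0$ or an active constraint.
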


\begin{pf}
The optimization problem is a convex quadratically constrained quadratic program (QCQP) since $Q \succeq 0$ and $A \succ 0$. Slater's condition holds for any $\Gamma>0$, and therefore strong duality applies.

Introduce the Lagrangian with multiplier $\lambda \geq 0$:
\[
\mathcal{L}(\delta, \lambda) = \delta^{\top} Q \delta - 2 b^{\top} Q \delta + c^{\top} \delta + \lambda\left(\delta^{\top} A \delta - \Gamma\right).
\]
Since $Q$ is symmetric by definition, the stationarity condition with respect to $\delta$ is given by $\nabla_{\delta} \mathcal{L} = 2(Q+\lambda A) \delta - 2 Q b + c = 0$. This implies the explicit form:
\[
\delta(\lambda) = (Q+\lambda A)^{-1}\left(Q b - \frac{1}{2} c\right).
\]
Complementary slackness requires $\lambda(\delta^{\top} A \delta - \Gamma)=0$, subject to $\delta^{\top} A \delta \leq \Gamma$ and $\lambda \geq 0$.

If the unconstrained solution satisfies $\delta(0)^{\top} A \delta(0) < \Gamma$, the constraint is inactive and $\lambda^{\star}=0$. Otherwise, the map $\lambda \mapsto \delta(\lambda)^{\top} A \delta(\lambda)$ is strictly decreasing, and hence there exists a unique $\lambda^{\star} > 0$ satisfying $\delta(\lambda^{\star})^{\top} A \delta(\lambda^{\star}) = \Gamma$. This establishes the optimal distortion $\delta^{\star}$.

Finally, under local mean-field conjectures, the induced Berk-Nash equilibrium mapping is $x^{\star}(\delta)= M(b-\delta)$. Substituting $\delta^{\star}$ yields the equilibrium result. Uniqueness follows from the nonsingularity of $R+\widetilde{G}$.
\end{pf}

\begin{figure}[h]
\centering
\includegraphics[width=0.55\linewidth]{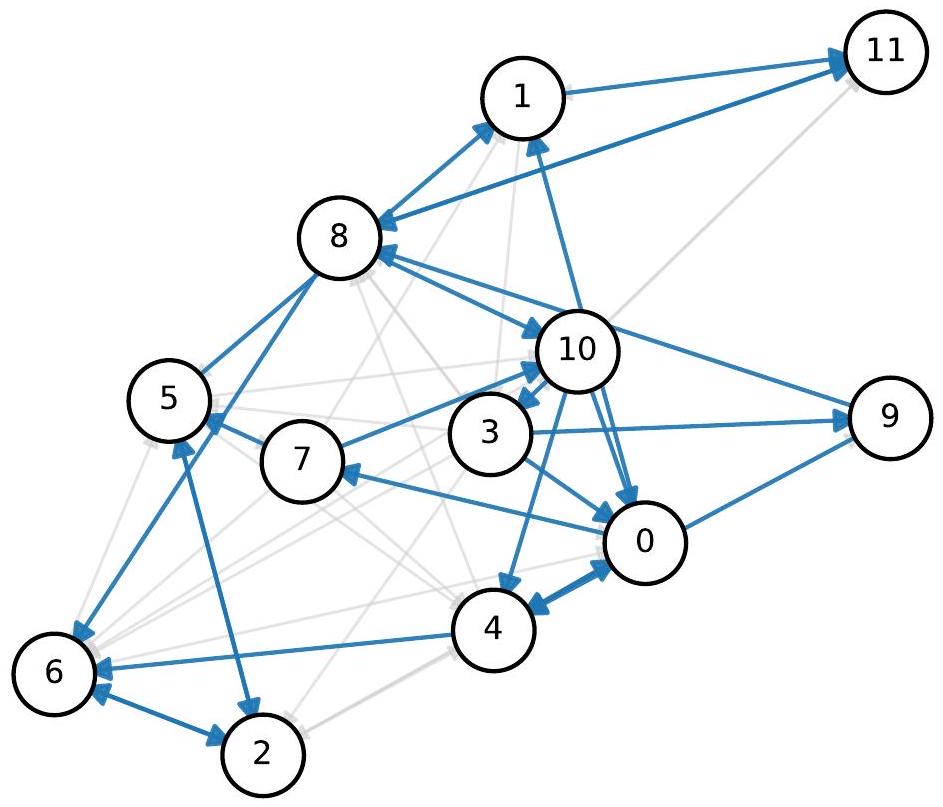}
\caption{Network mismatch: True dense interactions (gray) vs. sparse subjective attention (blue). Agents ignore long-range dependencies, creating persistent model misspecification.}
\label{fig:network}
\end{figure}

\section{NUMERICAL EXAMPLES}

We illustrate the theoretical results using a networked quadratic game with $n=12$ agents. The true interaction matrix $G$ is dense but unknown to the agents, while agents form local mean-field conjectures based on a sparse observation graph (Figure \ref{fig:network}). Each agent attends to approximately $3$ neighbors, capturing roughly $30\%$ of the total interaction weight.

\begin{figure}[h]
\centering
\includegraphics[width=1\linewidth]{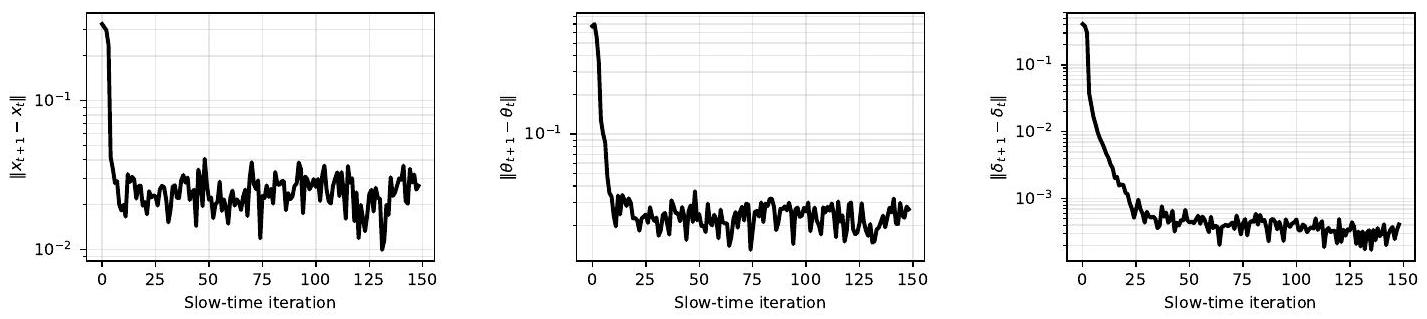}
\caption{Convergence diagnostics plotted on a log scale. The rapid decay of action and conjecture updates relative to the distortion updates ($\|\Delta \delta\|$) empirically validates the two-time-scale separation principle.}
\label{fig:convergence}
\end{figure}

\subsection{Two-Time-Scale Dynamics}
We simulate the coupled learning dynamics where agents update actions and conjectures on a fast time scale (step size $\alpha_k$), while the designer updates the distortion $\delta$ on a slow time scale (step size $\beta_k \ll \alpha_k$).
The simulation incorporates a quadratic distortion budget $\Gamma$. Figure \ref{fig:convergence} displays the convergence metric $\|\Delta \phi_k\| := \|\phi_{k+1}-\phi_k\|$ for the system variables $\phi \in \{x, \theta, \delta\}$.
The results confirm the time-scale separation: agent-level variables ($x, \theta$) decay rapidly (fast equilibration), while the designer's distortion $\delta$ evolves slowly, validating the quasi-static assumption essential for the Stackelberg approximation.

\begin{table}[h]
\centering
\small 
\setlength{\tabcolsep}{3pt} 
\caption{Equilibrium values: distortion ($\delta^{\star}$), action ($x^{\star}$), and conjecture ($\theta^{\star}$).}
\begin{tabular}{c|ccc||c|ccc}
\hline
Agent & $\delta_{i}^{\star}$ & $x_{i}^{\star}$ & $\theta_{i}^{\star}$ & Agent & $\delta_{i}^{\star}$ & $x_{i}^{\star}$ & $\theta_{i}^{\star}$ \\
\hline
0 & 0.39 & 1.03 & 0.00 & 6 & -0.16 & -0.62 & 0.77 \\
1 & -0.10 & -0.49 & 0.35 & 7 & 0.43 & 1.52 & -0.16 \\
2 & 0.65 & 1.58 & -1.13 & 8 & 0.10 & 0.09 & 0.15 \\
3 & 0.30 & 1.00 & -0.19 & 9 & 0.86 & 2.88 & -0.71 \\
4 & 0.39 & 0.82 & 0.24 & 10 & 0.55 & 1.58 & -0.28 \\
5 & 0.41 & 1.37 & -0.08 & 11 & 0.23 & 1.09 & 0.55 \\
\hline
\end{tabular}
\label{tab:equilibrium}
\end{table}

\subsection{Equilibrium Analysis}
The system converges to a unique BN Stackelberg equilibrium, detailed in Table \ref{tab:equilibrium}.
The optimal distortions $\delta^{\star}$ exhibit significant heterogeneity, reflecting a strategy to target ``central'' agents (e.g., Agent 9, who receives the maximal distortion $\delta_9^{\star} \approx 0.86$).
This targeted manipulation yields a lower aggregate cost compared to the baseline ($\delta=0$), validating the efficacy of cognitive arbitrage.

\section{Conclusion}

This paper presented a framework for analyzing and designing network games with misspecified models. We showed that when agents rely on simplified conjectures—such as local mean-field approximations—the resulting Berk-Nash equilibrium can systematically deviate from the true Nash equilibrium. This deviation, quantified by the Value of Misspecification, reveals that subjective learning is not merely a limitation but a design channel. We introduced cognitive arbitrage as a mechanism to exploit this channel, allowing a designer to optimize system performance by minimally distorting the information agents use to form their world models.

Our analysis of linear-quadratic games provided closed-form characterizations of both the BNE and the optimal cognitive distortion strategy. Numerical results validated the theoretical findings, demonstrating the efficacy of the proposed two-time-scale learning dynamics. Future work will extend this framework to dynamic games with Markovian states, explore the robustness of cognitive arbitrage against adversarial agents, and investigate decentralized mechanisms where agents endogenously select their conjecture classes.

\bibliography{ifacconf}

\end{document}